\newtheorem{proposition}{Proposition}
\newcommand{\mb}[1]{\mathbf{#1}}
\newcommand{\mc}[1]{\mathcal{#1}}
\newcommand{\mbb}[1]{\mathbb{#1}}
\newcommand{\mtm}[1]{\mathrm{#1}}
\begin{document}
	\title{Mutual Information-oriented ISAC Beamforming Design for Large Dimensional Antenna Array}
	
	\date{} 
	
	\author{Shanfeng~Xu, 
		Yanshuo~Cheng,
		Siqiang~Wang,
		Xinyi~Wang,~\IEEEmembership{Member,~IEEE,}
		Zhong~Zheng,~\IEEEmembership{Member,~IEEE,}
		Zesong~Fei,~\IEEEmembership{Senior Member,~IEEE}
		\thanks{
			Shanfeng Xu is with the School of Information and Electronics, Beijing
			Institute of Technology, Beijing 100081, China, and also with the  China
			Academy of Electronics and Information Technology, Beijing 100041, China
			(e-mail: xushanfeng88@163.com).
			
			Yanshuo~Cheng,	Siqiang~Wang, 		Xinyi~Wang,		Zhong~Zheng and
			Zesong~Fei are with the School of Information and Electronics, Beijing Institute of Technology, Beijing 100081, China. (e-mails: chengyanshuobit@163.com, 3120205406@bit.edu.cn, bit\_wangxy@163.com, zhong.zheng@bit.edu.cn, feizesong@bit.edu.cn)}
	}

	\maketitle
	
\begin{abstract}
	 In this paper, we study the beamforming design for multiple-input multiple-output (MIMO) ISAC systems, with the weighted mutual information (MI) comprising sensing and communication perspectives adopted as the performance metric. In particular, the weighted sum of the communication mutual information and the sensing mutual information is shown to  asymptotically converge to a deterministic limit when the number of transmit and receive antennas grow to infinity. This deterministic limit is derived by utilizing  the operator-valued free probability theory. Subsequently, an efficient {projected gradient ascent (PGA) algorithm} is proposed to optimize the transmit beamforming matrix with the aim of maximizing the weighted asymptotic MI. Numerical results validate that the derived closed-form expression matches well with the Monte Carlo simulation results and the proposed optimization algorithm is able to improve the weighted asymptotic MI significantly. We also illustrate the trade-off between asymptotic sensing and asymptotic communication MI.
\end{abstract}

\begin{IEEEkeywords}
	Integrated sensing and communication, mutual information, beamforming design, free probability theory.
\end{IEEEkeywords}

\section{Introduction}

The sixth generation (6G) wireless network has stimulated numerous innovative applications, such as smart cities and intelligent transportation. Toward this end, higher requirements of communication and sensing capabilities are raised for numerous nodes within the network. Due to the capability of enabling the dual functionalities of information transmission and target sensing with shared wireless resources, integrated sensing and communication (ISAC) has been regarded as a key enabler for the realization of 6G network \cite{zhang2021enabling}. 

As a promising approach to enhance the performances of ISAC systems, beamforming design has been investigated in numerous works adopting various communication and sensing performance metrics. In \cite{liu2018mu}, the authors considered  transmit
beampattern and signal-to-interference-plus-noise (SINR) as sensing and communication performance metrics respectively.
As a step further, due to its capability of characterizing the lower bound of parameter estimation, the Cram\'{e}r-Rao Bound (CRB) has also been adopted as a sensing performance metric in ISAC systems \cite{wang2022partially}.

{In  most existing works, the performance metrics for sensing and communications are diverse, resulting in difficulties in evaluating the trade-offs between sensing and communication performance. To facilitate  ISAC beamforming design under unified performance metric, the authors in \cite{Al2023KL1} proposed a Kullback-Leibler (KL) divergence based unified performance metric, and analyzed the error rate performance of communication users and detection performance for sensing targets. Based on this unified metric, the authors in \cite{Fei2024KL2} investigated constellation and beamforming design, and depicted the Pareto bound in terms of KL divergence as well as bit error rate and detection probability. Different from KL divergence, which concentrates on the reliability of data transmission, mutual information is closely related to the efficiency of transmission, and has also been utilized as the performance metric in ISAC systems. { In \cite{xu2015radar}, radar MI and  communication channel capacity were formulated for ISAC systems. As a step further, in \cite{liu2019robust}, OFDM-based ISAC waveform is designed with the aim of maximizing the weighted sum of the communication rates and the conditional radar MI.}  However, it should be noted that the aforementioned MI-oriented  works are not conducted within the context of  large dimensional antenna array}, in which case {In the case of large dimensional antenna array, perfect CSI becomes challenging to obtain, thereby affecting the accuracy of MI.}   

{In this paper, we investigate the MI-oriented transmit beamforming design for a MIMO ISAC system, where an ISAC user equipment (UE) senses an extended target  while simultaneously transmitting data to a base station (BS).} Based on the channel state information  (CSI) available at the {UE}, we formulate a transmit beamforming design problem with the aim of maximizing the weighted asymptotic MI.  Applying the operator-valued free probability theory, we derive the closed-form expression of the weighted asymptotic MI and reformulate the beamforming design problem. Subsequently, based on the obtained closed-form expression, we propose an efficient {projected gradient ascent (PGA) algorithm} to solve the problem. Numerical results validate the accuracy of the derived expression, as well as the convergence and effectiveness of the proposed algorithm. In addition, the trade-off between sensing and communication MI is also depicted.

\section{System Model}
\begin{figure}[t]
	\centerline{\includegraphics[width=0.5\columnwidth]{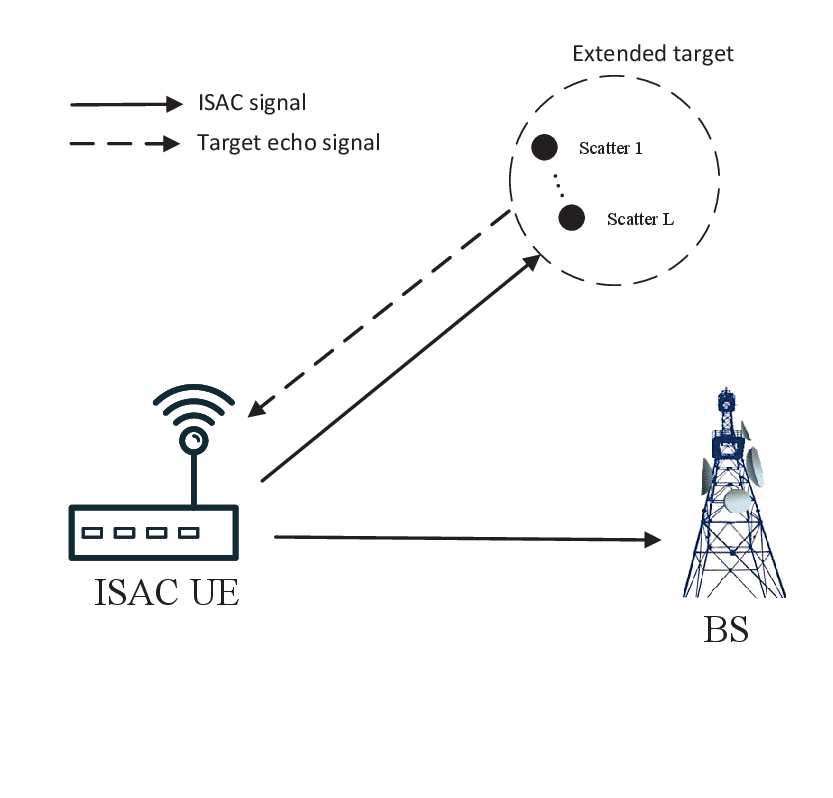}}
	\caption{Considered MIMO ISAC system.}
	\label{System}
\end{figure}
\subsection{Signal Model}
We consider an ISAC system as shown in Fig.~\ref{System}, where an ISAC UE is equipped with $N_t$ transmit  antennas and $N_r$ receive  antennas. {The UE simultaneously transmits data  to the BS equipped with $N_u$ receive antennas and senses an extended target with  $L$ scatters uniformly distributed in the proximity of the center.} {The antennas at both UE and BS are assumed to be large dimensional antenna array.}  The transmit beamforming matrix of the UE is $\mb{W}\in\mathbb{C}^{N_t \times M}$.  The  data symbol matrix is denoted as $\mb{S}\in\mathbb{C}^{M\times N_s}$, where $N_s$ denotes the number of signal samples and $M$ denotes the number of data steams satisfying $M \leq N_t$. It is assumed that the signal vectors of $\mb{S}$ are statistically orthogonal to each other, i.e., $\mbb{E}[\mb{S} \mb{S}^\dagger ]=\mb{I}_{M}$, with the notation $(\cdot)^\dag$ denoting the conjugate transpose operation. 

The received signal at BS can be expressed as
\begin{align}\label{receive_c}
\mb{Y}_c=\mb{H}_c\mb{WS} + \mb{N}_C,
\end{align}
where $\mb{H}_c$ is the channel between UE and BS,  $\mb{N}_C\in\mathbb{C}^{N_u\times N_s}$ is the additive white Gaussian noise (AWGN) at the receiving antennas of BS and $\mb{N}_C\sim\mc{CN}(0,\sigma_{c}^{2}\mb{I}_{N_s})$.
The received echoes at the UE, $\mb{Y}_s\in\mathbb{C}^{N_r}$, can be expressed as
\begin{align}\label{receive_s}
\mb{Y}_s=\sum_{l=1}^L\mb{G}_{l}\mb{WS} + \mb{N}_S,
\end{align}
where $\mb{G}_{l}\in\mathbb{C}^{N_r\times N_t}$ is the round-trip channel matrix between the $l-$th scatter and UE, $\mb{N}_S\in\mathbb{C}^{N_r\times N_s}$ is the AWGN at the receiving antennas of the UE and  $\mb{N}_S\sim\mc{CN}(0,\sigma_{s}^{2}\mb{I}_{N_s})$.
\subsection{Channel Model}
{Due to the limitations of the Kronecker channel model in accurately representing the correlation between transceivers when scattering clusters are uniformly distributed, we employ the Weichselberger MIMO channel model \cite{weichselberger2006stochastic}, which captures the spatial correlation at both ends of the link and their mutual interdependence, for both radar sensing channel and the communication channel.}
The channels in \eqref{receive_c} and \eqref{receive_s} can be expressed as
\begin{align}
\mb{H}_c & = \overline{\mb{H}}_c + \widetilde{\mb{H}}_c = \overline{\mb{H}}_c + \mb{U} (\mb{M} \odot \mb{P}) \mb{V}^{\dagger},\label{H_c}\\
\mb{G}_l &= \overline{\mb{G}}_l + \widetilde{\mb{G}}_l = \overline{\mb{G}}_l + \mb{R}_l (\mb{N}_l\odot\mb{Q}_l) \mb{T}_l^{\dagger}, 1\le l\le L,\label{Gl}
\end{align}
where $\overline{\mb{H}}_c$ and $\overline{\mb{G}}_l$ are deterministic matrices which denote the line-of-sight (LoS) component of $\mb{H}_c$ and $\mb{G}_l$. $\mb{U}$, $\mb{V}$, $\mb{R}_l$ and $\mb{T}_l$ are deterministic unitary matrices. $\mb{M}$ and $\mb{N}_l$ are deterministic nonnegative matrices which represent the variance profiles of the random components $\widetilde{\mb{H}}_c$ and $\widetilde{\mb{G}}_l$ respectively. $\mb{P}\in \mathbb{C}^{N_u\times N_t}$ and $\mb{Q}_l \in \mathbb{C}^{N_r\times N_t} $ are complex Gaussian distributed with  $[\mb{P}]_{i,j}\sim\mc{CN}(0,1/N_t)$ and $[\mb{Q}_l]_{i,j}\sim\mc{CN}(0,1/N_t)$.

\subsection{ { Problem Formulation}}
From now on, for convenience of expression, we use $\widetilde I_s({\sigma}\!^2)$ and $\widetilde I_c({\sigma}\!^2)$ to denote $\widetilde I_s({{\sigma_s}\!^2})$ and $\widetilde I_c({\sigma_c}\!^2)$, respectively. According to \cite{xu2023joint}, the sensing MI can be expressed as
\begin{align}
\widetilde{I_s}({\sigma}^2) \notag \! &= \mtm{logdet}\left(\mb{I}_{N_s}+ \frac{1}{\sigma^2}\sum_{l=1}^{L}(\mb{S}^\dagger\mb{W}^\dagger\mb{G}_l^\dagger \mb{G}_l\mb{WS})\right)\notag \\
&= \mtm{logdet}\left(\mb{I}_{LN_r}+ \frac{1}{\sigma^2}(\hat{\mb{G}}\mb{S}\mb{S}^\dagger\hat{\mb{G}}^\dagger)\right), 
\end{align}
% the expectation is taken over the random components $\widetilde{\mb{G}}_l$ in \eqref{Gl} and random data symbols $\mb{S}$,
{where  the matrix $\hat{\mb{G}}$ is defined as $\hat{\mb{G}}=[{(\bf{G}}_{1}\mb{W})^\dagger, {(\bf{G}}_{2}\mb{W})^\dagger,..., {(\bf{G}}_{l}\mb{W})^\dagger]^\dagger$.}
Based on the received signal at BS, the communication MI for the considered ISAC system can be expressed as \cite{tulino2004random}
\begin{align}
\widetilde I_c(\sigma^2)  &= \mtm{logdet}(\mb{I}_{N_u}+ \frac{1}{\sigma^2}{\mb{H}_c}\mb{Q}\mb{H}_{c}^\dagger),
\end{align} 
%where $\hat{\mb{H}}=\mb{H}_c\mb{W}$.
where $\mb{Q}=\mb{W}\mb{W}^\dagger$.
In order to achieve a balance between communication performance and sensing performance, we define a weighted MI as the performance metric for the proposed ISAC system, which is expressed as
\begin{align}\label{WMI}
\widetilde I(\mb{W}) = \rho \widetilde I_s(\sigma^2)  + (1-\rho) \widetilde I_c(\sigma^2) ,
\end{align}
where $\rho$ is a weighting factor that determines the weights of communication performance and sensing performance in the ISAC system.
Furthermore, the transmit beamforming problem can be formulated as
\begin{subequations}\label{P1}
	\begin{align}
	\left ({\textbf {P1} }\right)&\max \limits _{ {{{\bf{W}}}}}~~	\widetilde I({\bf{W}})    \label{P1a}  
	\\&  \mathrm {s.t.} \quad  {{\left\| \mb{W}  \right\|}_{F}}^{2} \leq P_t, \label{power_1}
	\end{align}	
\end{subequations}
where ${{\left\| \cdot  \right\|}_{F}}\!^{2}$ denotes the Frobenius norm, (\ref{power_1}) is the transmit power constraint and $P_t$ is the transmit power budget.

{According to \cite[Theorem 2]{Hoydis2011Asymptotic}, we have
	\begin{align}
	&\frac{1}{LN_r}\widetilde{I_s}({\sigma}^2) \xrightarrow[LN_r\rightarrow \infty]{a.s.}  {I_s}({\sigma}^2) \label{I_s_as1}\\
	&\frac{1}{N_u}\widetilde{I_c}({\sigma}^2) \xrightarrow[N_u\rightarrow \infty]{a.s.}  {I_c}({\sigma}^2)
	\end{align}
	where the notation  $\xrightarrow[LN_r\rightarrow \infty]{a.s.} $ denotes the almost surely convergence as $LN_r$ and $N_t$ tend to infinity with the ratio $LN_r/N_t$ fixed  and $\xrightarrow[N_u\rightarrow \infty]{a.s.}$ denotes the almost surely convergence as $N_u$ and $N_t$ tend to infinity with the ratio $N_u/N_t$ fixed. ${I_s}({\sigma}^2)$ and ${I_c}({\sigma}^2)$ are both deterministic,  {depending on the statistics of $\mb{G}_l$ and $\mb{H}_c$, respectively.}} However, since the  channel statistics are not perfectly known at UE, we aim to optimize the weighted asymptotic MI, which  is defined as
\begin{align}\label{asymptoticWMI}
I(\mb{W}) = \rho LN_r I_s(\sigma^2)  + (1-\rho) N_u I_c(\sigma^2) ,
\end{align}
Furthermore, the corresponding transmit beamforming problem  can be formulated as:
\begin{subequations}\label{P2}
	\begin{align}
	\left ({\textbf {P2} }\right)&\max \limits _{ {{{\bf{W}}}}}~~	I({\bf{W}})    \label{P2a}  
	\\&  \mathrm {s.t.} \quad  {{\left\| \mb{W}  \right\|}_{F}}^{2} \leq P_t. \label{power_2}
	\end{align}	
\end{subequations}
 To solve this problem, we will derive the closed-form expression for  the  weighted asymptotic MI $I(\mb{W})$ in Section \ref{close-form}.

\section{Problem Reformulation and Proposed Algorithm} \label{close-form}
{In this section, we reformulate  Problem $	\left ({\textbf {P1} }\right)$ and propose an efficient algorithm to solve the reformulated problem.  Specially, in Section~\ref{Formu1}, we first utilize the free probability theory and the  linearization trick to derive the closed-form expression of the Cauchy transform. Based on this, we reformulate  Problem $	\left ({\textbf {P1} }\right)$ by deriving the closed-form expression for the weighted MI of the considered ISAC system. Finally, the PGA algorithm is proposed to solve the reformulated problem in Section~\ref{Formu2}.}

\subsection{{Problem Reformulation} }\label{Formu1}
 Denote $\mb{B}_1=\hat{\mb{G}}\mb{S}\mb{S}^\dagger\hat{\mb{G}}^\dagger$ and  define the empirical cumulative distribution function (ECDF) of the $B_1$'s eigenvalues as
 \begin{align}
\widetilde{F}_{\mathbf{B}_1}(x)=\frac{1}{LN_r}\sum_{i=1}^{LN_r}1\{\lambda_i(\mathbf{B}_1)\leq x\},
 \end{align}
 where $\lambda_1(\mathbf{B}_1)$,...,$\lambda_{LN_r}(\mathbf{B}_1)$ are the eigenvalues of $\mb{B}_1$ and $1\{ \cdot   \}$ is the indicator function.  
% Since the nonzero singular values of $\mb{B}_1$ and $\mb{B}_{1}^{\dagger}$ are identical, we have
%  \begin{align}
% N\mathsf{F}_{\mathbf{HH}^{\dagger}}^{N}(x)-Nu(x)=K\mathsf{F}_{\mathbf{H}^{\dagger}\mathbf{H}}^{K}(x)-Ku(x)
%  \end{align}
Therefore, $\frac{1}{LN_r} \widetilde I_s(\sigma^2)$ can be rewritten as
\begin{align}
\frac{1}{LN_r} \widetilde{I}_s({\sigma}^2) &=  \frac{1}{LN_r} \mtm{logdet}\left(\mb{I}_{LN_r}+ \frac{1}{\sigma^2}\mb B_1\right), \notag \\
	& = \sum_{i=1}^{LN_r} \log \left(1+ \frac{1}{\sigma^2} \lambda_i\left(\mb{B}_1\right)\right),  \notag \\
& =\int_0^{\infty} \log (1+\frac{1}{\sigma^2} x) d \widetilde {F}_{\mb{B}_1}(x).
\end{align}
For $\mb B_1$,  its resolvent  $\widetilde{\mc{G}}_{\mb{B}_1}(-z)$ and  Cauchy transform ${\mc{G}}_{\mb{B}_1}(-z)$ for $\mb B_1$  are defined as
\begin{align}
\widetilde {\mc{G}}_{\mb{B}_1}(z) = \frac{1}{LN_r}\mtm{Tr}(z\mb I - \mb B_1)^{-1}=\int_{0}^{\infty} \frac{1}{z-\lambda}\mathrm{d}\widetilde F_{\mb{B}_1}, \\
{\mc{G}}_{\mb{B}_1}(z) = \frac{1}{LN_r} \mbb{E} \left[{\mtm{Tr}(z\mb I - \mb B_1)^{-1}}\right ]=\int_{0}^{\infty} \frac{1}{z-\lambda}\mathrm{d}F_{\mb{B}_1}.
\end{align}
where $ z = -\frac{1}{\sigma^2}$ and $F_{\mb{B}_1}$ is the cumulative distribution function (CDF) of the eigenvalues of $\mb{B}_1$. Therefore, we have
\begin{align}\label{cauchy1}
\frac{1}{LN_r} \frac{d \widetilde{I}_s(\sigma^2)}{dz} = -\frac{1}{z}- \widetilde {\mc{G}}_{\mb{B}_1}(-z).
\end{align}
Applying the relationship between the Shannon transform $\mc{V}_{\mb{B}_1}(z)$ and the Cauchy transform $\mc{G}_{\mb{B}_1}(-z)$, the following equation holds
\begin{align}
\frac{d\mc{V}_{\mb{B}_1}(z)}{dz}	= -\frac{1}{z}- \mc{G}_{\mb{B}_1}(-z).
\end{align}
According to \cite{Silverstein1995Strong}, the resolvent $\widetilde {\mc{G}}_{\mb{B}_1}(-z)$ converges almost surely to the Cauchy transform ${\mc{G}}_{\mb{B}_1}(-z)$ for $\mb B_1$. Therefore, we have  
\begin{align}
\frac{1}{LN_r}\widetilde{I_s}({\sigma}^2) \xrightarrow[LN_r\rightarrow \infty]{a.s.} \mc{V}_{\mb{B}_1}(z).
\end{align} 
Since \eqref{I_s_as1} holds, we can rewrite ${I_s}({\sigma}^2)$ as
\begin{align}
{I_s}({\sigma}^2) = \mc{V}_{\mb{B}_1}(z).
\end{align} 

Similarly, for the asymptotic communication MI, we have
\begin{align}
&{I_c}({\sigma}^2) = \mc{V}_{\mb{B}_2}(z), \\
&	\frac{d\mc{V}_{\mb{B}_2}(z)}{dz}	=-\frac{1}{z}-\mc{G}_{\mb{B}_2}(-z),\\
&\mc{G}_{\mb{B}_2}(z) = \int_{0}^{\infty} \frac{1}{z-\lambda}\mathrm{d}F_{\mb{B}_2}(\lambda)
\end{align}
where ${\mb{B}_2}={\mb{H}_c}\mb{Q}\mb{H}_{c}^\dagger$, $F_{\mb{B_2}}(\lambda)$ is CDF of $\mb{B}_2$,     $\mc{V}_{\mb{B}_2}(\sigma^2)$ is the Shannon transform of $\mb{B}_2$ and $\mc{G}_{\mb{B}_2}(z)$ is the Cauchy transform of $\mb{B}_2$.

To obtain the closed-form expression of $\mc{G}_{\mb{B}_i}(z)$, free probability theory serves as a powerful  analytical tool \cite{belinschi2017analytic}. However, in the considered system, $\hat{\mb{G}}$ and $\mb{S}$ are not free in the classic free probability aspect, {resulting in difficulties in directly obtaining the Cauchy Transform for the product of $\hat{\mb{G}}$, $\mb{S}$, $\mb{S}^\dagger$ and $\hat{\mb{G}}^\dagger$. To address this issue, the linearization trick will be adopted to embed the non-free matrices into a lager matrix, in which the  deterministic parts and random parts are proved to be asymptotically free. Then the desired Cauchy transform can be obtained  from the transformation of the operator-valued Cauchy transform for the embedded matrix.} For the convenience of expression, we define $\overline{\mb{G}} = [{(\overline{\bf{G}}}_{1}\mb{W})^T, {(\overline{\bf{G}}}_{2}\mb{W})^T,..., {(\overline{\bf{G}}}_{l}\mb{W})^T]^T$ and  $\overline{\mb{H}} = \overline{\mb{H}}_c\mb{W}$. Since $I_c$ and $I_s$ share a similar form, we only provide the detailed derivation of $\mc{G}_{\mb{B}_1}(z)$.

To obtain the closed-form of $\mc{G}_{\mb{B}_1}(z)$, we first apply the Anderson's linearization trick \cite{belinschi2017analytic} to construct a block matrix of size $(LN_r+N_s+2M )\times (LN_r+N_s+2M )$ as
{\begin{align}
	\setlength{\arraycolsep}{1pt}
	\def\arraystretch{1}
	\mb{B}_L  \!\!=\! \!\left[\begin{array}{cccc}
	\mb{0}_{LN_r\times LN_r} & \mb{0}_{LN_r\times M} & \mb{0}_{LN_r\times N_s} & \hat{\mb{G}}\\
	\mb{0}_{{M}\times LN_r} & \mb{0}_{{M}\times {M}} & \mb{S} & -\mb{I}_M\\
	\mb{0}_{N_s\times LN_r} & \mb{S}^\dagger & -\mb{I}_{N_s} & \mb{0}_{N_s\times M}\\
	\hat{\mb{G}}^{\dagger} & -\mb{I}_M & \mb{0}_{M\times N_s} & \mb{0}_{{M}\times {M}}
	\end{array}\right].\label{eqBL}
	\end{align}	
	The operator-valued Cauchy transform $\mc{G}_{\mb{B}_L}^{\mc{D}}$ of $\mb{B}_L$  \cite{belinschi2017analytic} is given by
	\begin{align}
	\mc{G}_{\mb{B}_L}^{\mc{D}}(\boldsymbol{\Lambda}(z)) = \mbb{E}_{\mc{D}}\left[\left(\boldsymbol{\Lambda}(z) - \mb{B}_L\right)^{-1}\right],\label{eqGBL0}
	\end{align}
	where $\mbb{E}_\mc{D}\left[\mb{X}\right]$ is defined as 
	\begin{align}
	\setlength{\arraycolsep}{1.5pt}
	\def\arraystretch{1}
	\mbb{E}_\mc{D}\left[ \mb{X}\right] \!=\! \left[\begin{array}{c|c|c|c}
	\mbb{E}\left[\mb{X}_1\right] & & & \\\hdashline
	& \mbb{E}\left[\mb{X}_2\right] & & \\\hdashline
	& & \mbb{E}\left[\mb{X}_3\right] & \\\hdashline
	& & &  \mbb{E}\left[\mb{X}_4\right]
	\end{array}\right],\label{eqEDX}
	\end{align}
	where $\mb{X}_1=\{\mb{X}\}_1^{LN_r}$, $\mb{X}_2=\{\mb{X}\}_{LN_r+1}^{LN_r+M}$, $\mb{X}_3=\{\mb{X}\}_{LN_r+M+1}^{LN_r+M+N_s}$, $\mb{X}_4=\{\mb{X}\}_{LN_r+M+N_s+1}^{LN_r+2M+N_s}$ ,with the notation 
	$ { \{ \mb{A} \} ^{b}_{a}}    $ denoting the submatrix of $\bf A$ containing the rows and columns with indices from $a $ to $ b $, i.e.,  $  \left[\{ \mb{A} \} ^{b}_{a}\right]_{i,j} = \left[ \mb{A}\right]_{i+a-1,j+a-1}  $ for $ 1\le i,j\le b-a+1$, where the notation $ [\mb{A}]_{i,j} $ is the element in the $i$-th row and $j$-th column of matrix $ \mb{A} $. 	
	The matrix function $ \boldsymbol{\Lambda}(z) $ is defined as
	\begin{align}
	\boldsymbol{\Lambda}(z) = \begin{bmatrix}
	z\mb{I}_{LN_r} & \mb{0}_{LN_r\times (N_s+2M)}\\
	\mb{0}_{(N_s+2M)\times LN_r} & \mb{0}_{(N_s+2M)\times (N_s+2M)} 
	\end{bmatrix}.\label{eqLambdaz}
	\end{align}
	Then $\mc{G}_{\mb{B}_1}(z)$ is given by
	\begin{align}
	\mc{G}_{\mb{B}_1}(z) = \frac{1}{LN_r}\mathrm{Tr}\left(\left\{\mc{G}_{	\mb{B}_L}^{\mc{D}}(\boldsymbol{\Lambda}(z))\right\}^{(1,1)}\right),\label{eqGB_GL}
	\end{align}
	where $\{\cdot\}^{(1,1)}$ represents the upper-left $LN_r\times LN_r$ matrix block and the operator $\text{Tr}(\cdot)$ represents the trace of the matrix. It can be observed that $\mb{B}_L$ shares a similar structure with the matrix $\mb{L}$ proposed in \cite[Prop. 2]{zheng2023mutual}, which inspires us to use  the method in \cite{zheng2023mutual}. Specifically, the  Cauchy 
	transform $\mc{G}_{\mb{B}_1}(z)$ can be obtained with the following proposition.
	\begin{proposition}\label{prop_cauchyB1}
		The Cauchy transform of $\mb{B}_1$, with $z\in\mbb{C}^+$, is given by
		\begin{align}\label{eqGB}
		\mc{G}_{\mb{B}_1}(z) = \frac{1}{LN_r}\mathrm{Tr}\left[	\mc{G}_{\widetilde{\mb{C}}}(z)\right],
		\end{align}
		where 	$ \mc{G}_{\widetilde{\mb{C}}}(z) $ satisfies the following equations
		{\begin{align}
			\mc{G}_{\widetilde{\mb{C}}}(z) &= \left(\widetilde{\boldsymbol{\Psi}}(z) - \overline{\mb{G}} {\Pi}^{-1}\overline{\mb{G}}^\dagger \right)^{-1},\label{eqGCt}\\
			\boldsymbol{{\Pi}} & =\boldsymbol{\Psi}(z)-\widetilde{\boldsymbol{\Phi}}(z)^{-1},\label{eqpi}
			\end{align}}
		where the matrices  $	\widetilde{\boldsymbol{\Psi}}(z)$, $	\boldsymbol{\Psi}(z)$, $	\widetilde{\boldsymbol{\Phi}}(z)$,
		$\boldsymbol{\Phi}(z)$ are respectively denoted as
		\begin{align}
		\widetilde{\boldsymbol{\Psi}}(z) &= z\mb{I}_{LN_r} -\mathrm{diag}\left\{   \widetilde{\eta}_1(\mc{G}_{\mb{C}}(z)),\  \ldots,\  \widetilde{\eta}_L(\mc{G}_{\mb{C}}(z)\right\} ,\label{eqPsis}\\
		\boldsymbol{\Psi}(z) &=- \sum_{l=1}^L \eta_l(\mc{G}_{{\widetilde{\mb{C}}}_l}(z)),\label{eqPsi1}\\
		\widetilde{\boldsymbol{\Phi}}(z) &= -\widetilde{\zeta}(\mc{G}_{\widetilde{\mb{D}}}(z)),\\
		\boldsymbol{\Phi}(z) &= \mb{I}_{N_s}-\zeta(\mc{G}_{\mb{D}}(z)),\label{eqPhi}
		\end{align}	
		where the notation  $\text{diag}(\mb{A},\cdots,\mb{B})$ represents the diagonal block matrix constructed by $  \mb{A},\cdots,\mb{B} $ matrices and $\eta_l(\widetilde{\mb{C}})$, $\widetilde{\eta}_l({\mb{C}})$, $\zeta(\mb{D})$, $\widetilde{\zeta}(\widetilde{\mb{D}})$ are the parameterized one-sided correlation matrices, which are shown as \eqref{A1}, \eqref{A2}, \eqref{A9} and \eqref{A10} in Appendix \ref{PARE}. The matrices 
		$\mc{G}_{\mb{C}}(z)$, $\mc{G}_{\widetilde{\mb{D}}}(z)$, $\mc{G}_{\mb{D}}(z)$, and $\mc{G}_{{\widetilde{\mb{C}}}_l}(z)$ are defined as
		
		\begin{align}
		\mc{G}_{\mb{C}}(z) &\!= \left(\boldsymbol{\Psi}(z) - \overline{\mb{G}}^{\dagger} \widetilde{\boldsymbol{\Psi}}(z)^{-1}\overline{\mb{G}}^{\dagger} - \widetilde{{\boldsymbol\Phi}}(z)^{-1} \right)^{-1}, \label{eqGC1} 
		\\
		\mc{G}_{\widetilde{\mb{D}}}(z) &= \boldsymbol{\Phi}(z)^{-1},\label{eqGDt} 
		\\
		\mc{G}_{\mb{D}}(z) &= \left(\widetilde{\boldsymbol{\Phi}}(z) - \left(\boldsymbol{\Psi}(z) -  \overline{\mb{G}}^{\dagger}\widetilde{\boldsymbol{\Psi}}(z)\overline{\mb{G}} \right)^{-1} \right)^{-1} ,\label{eqGDk}\\
		\mc{G}_{{\widetilde{\mb{C}}}_l}(z) &= \{\mc{G}_{\widetilde{\mb{C}}}(z)\}_{1+(l-1)N_r}^{lN_r}.\label{eqGC2} 
		\end{align}
		
	\end{proposition}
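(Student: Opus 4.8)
\emph{Proof sketch.} The plan is to compute the operator-valued Cauchy transform $\mc{G}_{\mb{B}_L}^{\mc{D}}(\boldsymbol{\Lambda}(z))$ of the linearized block matrix \eqref{eqBL} and then recover $\mc{G}_{\mb{B}_1}(z)$ from its upper-left block through \eqref{eqGB_GL}. First I would split $\mb{B}_L = \mb{B}_L^{\mathrm{det}} + \mb{B}_L^{\mathrm{ran}}$, where $\mb{B}_L^{\mathrm{det}}$ carries the deterministic content of \eqref{eqBL} --- the LoS term $\overline{\mb{G}}$ and the structural $\pm\mb{I}$ blocks --- while $\mb{B}_L^{\mathrm{ran}}$ collects the statistically independent random blocks: the $L$ Weichselberger factors $\widetilde{\mb{G}}_l\mb{W}=\mb{R}_l(\mb{N}_l\odot\mb{Q}_l)\mb{T}_l^{\dagger}\mb{W}$ occupying the $(1,4)$ and $(4,1)$ positions, and the symbol block $\mb{S}$ occupying the $(2,3)$ and $(3,2)$ positions (the latter being an isotropic special case of the same structure). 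Since the $\mb{Q}_l$ and $\mb{S}$ have i.i.d.\ Gaussian entries and $\mb{R}_l,\mb{T}_l$ are unitary, each random block is a Gaussian matrix with a separable (one-sided) variance profile, which is exactly the input that the operator-valued calculus requires.

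The second step is the probabilistic core. Although $\hat{\mb{G}}$ and $\mb{S}$ are not free in the scalar sense, after Anderson's linearization the deterministic and random components of $\mb{B}_L$ become asymptotically free over the algebra $\mc{D}$ of matrices that are block-diagonal with respect to the partition $(LN_r, M, N_s, M)$, so that $\mb{B}_L$ falls into the same framework as the matrix $\mb{L}$ of \cite[Prop. 2]{zheng2023mutual}. Consequently, for $z\in\mbb{C}^+$, the operator-valued Cauchy transform is characterized as the unique $\mc{D}$-valued analytic solution of
\[
\mc{G}_{\mb{B}_L}^{\mc{D}}(\boldsymbol{\Lambda}(z)) = \mbb{E}_{\mc{D}}\!\Big[\big(\boldsymbol{\Lambda}(z) - \mb{B}_L^{\mathrm{det}} - \sum\nolimits_{l=1}^{L}\eta_l(\mc{G}_{\mb{B}_L}^{\mc{D}}) - \zeta(\mc{G}_{\mb{B}_L}^{\mc{D}})\big)^{-1}\Big],
\]
where $\eta_l,\widetilde{\eta}_l,\zeta,\widetilde{\zeta}$ are the completely positive one-sided correlation maps induced by the variance profiles $\mb{N}_l$ and (for $\mb{S}$) the identity profile, written out as \eqref{A1}, \eqref{A2}, \eqref{A9} and \eqref{A10} in Appendix~\ref{PARE}.

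The third step is the algebraic reduction of this $4\times 4$-block fixed point by successive Schur complements. Using the fact that $\boldsymbol{\Lambda}(z)$ carries $z\mb{I}$ only in the first block and exploiting the explicit $\pm\mb{I}$ couplings in \eqref{eqBL}, I would eliminate the $N_s$-block first, which produces the pair $\boldsymbol{\Phi}(z),\widetilde{\boldsymbol{\Phi}}(z)$ (see \eqref{eqPhi}) together with $\mc{G}_{\mb{D}}(z)$ and $\mc{G}_{\widetilde{\mb{D}}}(z)$ of \eqref{eqGDt}--\eqref{eqGDk}; then eliminate the two inner $M$-blocks, which produces $\boldsymbol{\Psi}(z),\widetilde{\boldsymbol{\Psi}}(z)$ of \eqref{eqPsis}--\eqref{eqPsi1}, the auxiliary resolvent $\mc{G}_{\mb{C}}(z)$ of \eqref{eqGC1}, and the per-scatter blocks $\mc{G}_{\widetilde{\mb{C}}_l}(z)$ of \eqref{eqGC2} that feed the arguments of the $\eta_l$ maps, closing the system. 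What survives of the first block is precisely $\{\mc{G}_{\mb{B}_L}^{\mc{D}}(\boldsymbol{\Lambda}(z))\}^{(1,1)} = \mc{G}_{\widetilde{\mb{C}}}(z) = (\widetilde{\boldsymbol{\Psi}}(z) - \overline{\mb{G}}\,\boldsymbol{\Pi}^{-1}\overline{\mb{G}}^{\dagger})^{-1}$ with $\boldsymbol{\Pi}$ as in \eqref{eqpi} (i.e.\ \eqref{eqGCt}); plugging this into \eqref{eqGB_GL} and taking the normalized trace gives \eqref{eqGB}.

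The step I expect to be the main obstacle is the second one: rigorously establishing the asymptotic freeness over $\mc{D}$ of $\mb{B}_L^{\mathrm{det}}$ and $\mb{B}_L^{\mathrm{ran}}$, and verifying that the Hadamard-with-variance-profile (Weichselberger) structure of each $\widetilde{\mb{G}}_l$, together with the unitary $\mb{R}_l,\mb{T}_l$, translates into exactly the one-sided correlation maps and scalar parameters collected in Appendix~\ref{PARE}. Concretely, this means checking that the hypotheses of \cite[Prop. 2]{zheng2023mutual} --- and of the Weichselberger-model convergence results behind \cite{Hoydis2011Asymptotic} --- are satisfied for the present block pattern, including controlling the effect of the deterministic LoS perturbation $\overline{\mb{G}}$. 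Once that is in place, the Schur-complement bookkeeping of the third step, though lengthy, is routine, and the communication counterpart $\mc{G}_{\mb{B}_2}(z)$ is obtained along the same lines with $\mb{H}_c$ and $\mb{W}$ playing the roles of $\hat{\mb{G}}$ and $\mb{S}$.
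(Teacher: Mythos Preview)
Your proposal is correct and follows essentially the same route as the paper: split the linearized matrix $\mb{B}_L$ into deterministic and random parts, establish their asymptotic freeness over $\mc{D}$, invoke the subordination formula to obtain the fixed-point equation for $\mc{G}_{\mb{B}_L}^{\mc{D}}$, and then reduce by block matrix inversion (Schur complements) to the stated system; the paper itself gives only this outline and defers the details to \cite[Prop.~2]{zheng2023mutual}. The one place where your write-up is slightly loose is the displayed fixed-point equation, where the covariance correction should be the full $\mc{D}$-valued map placing $\widetilde{\eta}_l,\eta_l,\widetilde{\zeta},\zeta$ in their respective diagonal blocks rather than a scalar sum of $\eta_l$ and $\zeta$, but you clearly have the right object in mind.
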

	\begin{proof}
		The proof of this Proposition 1 is similar to the method presented in \cite[Prop. 2]{zheng2023mutual}. Therefore we 		provide  a brief outline of the proof. First, we prove that the deterministic and random components of the linearized matrix are free. Then, by applying the subordination formula, we derive the equation for the operator-valued Cauchy transform. Finally, using the matrix inversion formula, we decompose the operator-valued Cauchy transform, thereby obtaining the above expressions. 
	\end{proof}	
	Based on the closed-form expression of  the Cauchy transformation  $\mc{G}_{\mb{B}_1}(z)$ , we can derive the closed-form expression of the Shannon transform $ \mc{V}_{\mb{B}_1}(z) $ in the following proposition.

	\begin{proposition}\label{prop_shano1}
		The Shannon transform $ \mc{V}_{\mb{B}_1}(z) $, with $z\in\mbb{C}^+$, is given by 
		\begin{align}
			\mc{V}_{\mb{B}_1}(z)&=\frac{1}{LN_r}\log  {\det \left( {\frac{{{\boldsymbol{\widetilde \Psi }}\left(-z\right)}}{{ - z}}} \right)} + \frac{1}{LN_r}\log{\det \left( \boldsymbol{\Psi}(-z) - \overline{\mb{G}}^\dagger {\widetilde{\boldsymbol \Psi}}(-z)^{-1} \overline{\mb{G}}^\dagger - {\widetilde{\boldsymbol{\Phi}}}(-z)^{-1}  \right)}  \notag \\		
		&+\frac{1}{LN_r}\log \det\left( {\widetilde{\boldsymbol{\Phi}}}(-z)
		\right) + \frac{1}{LN_r}\mtm{Tr}\left( \mc{G}_{\widetilde{\mb{C}}}(-z) \left(-z\mb{I}_{LNr}- \widetilde{\boldsymbol{\Psi}}(-z) \right) \right),  \notag\\
		&+\frac{1}{LN_r}\mtm{Tr}\left( \mc{G}_{\widetilde{\mb{D}}}(-z) {\zeta}  \right) + \frac{1}{LN_r}\log \det \left( \boldsymbol{\Phi}(-z) \right) \label{VB1}
		\end{align}
		
		where  $	\widetilde{\boldsymbol{\Psi}}(-z)$, $	\boldsymbol{\Psi}(-z)$, $	\widetilde{\boldsymbol{\Phi}}(-z)$,
		$\mc{G}_{\widetilde{\mb{D}}}(-z)$ , $	\mc{G}_{\mb{D}}(-z) $, and $\mc{G}_{\widetilde{\mb{C}}}\left(-z\right)$ are given by
		\eqref{eqGCt}-\eqref{eqPhi} in Proposition~\ref{prop_cauchyB1}.
	\end{proposition}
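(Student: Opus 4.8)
\emph{Proof strategy.} The plan is to characterize $\mc{V}_{\mb{B}_1}(z)$ through the Shannon--Cauchy relation $\tfrac{d}{dz}\mc{V}_{\mb{B}_1}(z)=-\tfrac1z-\mc{G}_{\mb{B}_1}(-z)$ established above, together with the boundary condition that $\mc{V}_{\mb{B}_1}$ vanishes in the high-noise limit, and to verify that the right-hand side of \eqref{VB1} satisfies both. Since the linearized matrix $\mb{B}_L$ in \eqref{eqBL} has the same block structure as the matrix $\mb{L}$ of \cite[Prop. 2]{zheng2023mutual}, the argument parallels the derivation of the corresponding Shannon transform there, so I only outline the main steps.

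First I would regard the right-hand side of \eqref{VB1} as a potential functional $\mc{J}$ of $z$ and of the matrix order parameters $\mc{G}_{\mb{C}}$, $\mc{G}_{\widetilde{\mb{C}}}$, $\mc{G}_{\mb{D}}$, $\mc{G}_{\widetilde{\mb{D}}}$ (together with the auxiliary matrices $\widetilde{\boldsymbol{\Psi}}$, $\boldsymbol{\Psi}$, $\widetilde{\boldsymbol{\Phi}}$, $\boldsymbol{\Phi}$ built linearly from them), treated as free variables rather than as the solution of the self-consistent system \eqref{eqGCt}--\eqref{eqGC2}. By construction \eqref{VB1} is $\mc{J}$ evaluated at that solution, so the key step is to show that $\mc{J}$ is \emph{stationary} there, i.e.\ that all partial derivatives of $\mc{J}$ with respect to the order parameters vanish. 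Differentiating the $\log\det$ blocks of \eqref{VB1} produces the corresponding inverse (resolvent) matrices $\mc{G}_{\widetilde{\mb{C}}}$, $\mc{G}_{\mb{C}}$, $\mc{G}_{\widetilde{\mb{D}}}$, $\mc{G}_{\mb{D}}$, while differentiating the $\mathrm{Tr}$ blocks reproduces the order parameters themselves, and matching these against \eqref{eqPsis}--\eqref{eqPhi} and \eqref{eqGC1}--\eqref{eqGC2} shows that the contributions cancel in pairs. The algebraic ingredient needed here is that the parameterized one-sided correlation maps $\eta_l(\cdot)$, $\widetilde{\eta}_l(\cdot)$, $\zeta(\cdot)$, $\widetilde{\zeta}(\cdot)$ of \eqref{A1}, \eqref{A2}, \eqref{A9}, \eqref{A10} are adjoint to one another under the trace pairing, e.g.\ $\mathrm{Tr}[\mb{A}\,\eta_l(\mb{B})]=\mathrm{Tr}[\widetilde{\eta}_l(\mb{A})\,\mb{B}]$ and similarly for $\zeta$, $\widetilde{\zeta}$, which is precisely what the pairwise cancellation relies on.

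Granting stationarity, the total $z$-derivative of \eqref{VB1} reduces, by an envelope-type argument, to the partial derivative in the \emph{explicit} occurrences of $z$ only: the scalar factor $-z$ in the first $\log\det$ term, which contributes $-1/z$, and the $-z\mb{I}_{LN_r}$ inside the trace in the fourth term, which contributes $-\tfrac{1}{LN_r}\mathrm{Tr}[\mc{G}_{\widetilde{\mb{C}}}(-z)]$. Hence $\tfrac{d}{dz}\mc{V}_{\mb{B}_1}(z)=-\tfrac1z-\tfrac{1}{LN_r}\mathrm{Tr}[\mc{G}_{\widetilde{\mb{C}}}(-z)]=-\tfrac1z-\mc{G}_{\mb{B}_1}(-z)$ by \eqref{eqGB}, which is exactly the Shannon--Cauchy relation; the integration constant is pinned down by matching the high-noise limit $\sigma^2\to\infty$, in which $\mc{V}_{\mb{B}_1}\to 0$ while the explicit-$z$ scalings and the resolvent-type matrices $\mc{G}_{\widetilde{\mb{C}}}$, $\mc{G}_{\widetilde{\mb{D}}}$ degenerate so that the right-hand side of \eqref{VB1} also tends to $0$. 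I expect the main obstacle to be the stationarity verification: one must differentiate the six coupled $\log\det$ and trace blocks of \eqref{VB1} and show that they telescope, which requires the precise form of the maps in Appendix~\ref{PARE} and their adjointness properties. The expression for $\mc{V}_{\mb{B}_2}(z)$ then follows by the identical argument with $L=1$, $\hat{\mb{G}}$ replaced by $\mb{H}_c$ and $\mb{S}\mb{S}^\dagger$ by $\mb{Q}$.
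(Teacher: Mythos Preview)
Your proposal is correct and follows essentially the same route as the paper's proof: both verify the Shannon--Cauchy relation $\tfrac{d}{dz}\mc{V}_{\mb{B}_1}(z)=-\tfrac1z-\mc{G}_{\mb{B}_1}(-z)$ by differentiating \eqref{VB1} and exploiting the trace-adjointness $\mathrm{Tr}[\mb{A}\,\eta_l(\mb{B})]=\mathrm{Tr}[\widetilde{\eta}_l(\mb{A})\,\mb{B}]$ and $\mathrm{Tr}[\mb{A}\,\zeta(\mb{B})]=\mathrm{Tr}[\widetilde{\zeta}(\mb{A})\,\mb{B}]$ (the paper invokes these as \cite[Lemma~1]{Wang2024}) to cancel all contributions from the implicit $z$-dependence of the order parameters. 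The only difference is presentational: the paper carries out the cancellation by brute-force term-by-term differentiation of the six blocks (equations \eqref{term1}--\eqref{term6}) together with a Woodbury identity, whereas you package the same cancellation as stationarity of a potential and invoke an envelope argument; your version also explicitly addresses the integration constant via the high-noise limit, which the paper leaves implicit.
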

	\begin{proof}
		The proof of Propsition~\ref{prop_shano1} is given in Appendix \ref{appx_shano}  .
	\end{proof}

	Similarly, the Cauchy transform $\mc{G}_{\mb{B}_2}(z)$ of $\mb{B}_2$ and the Shannon transform $ \mc{V}_{\mb{B}_2}(z) $, with $z\in\mbb{C}^+$, are given by 
	\begin{align}\label{eqGB2}
	\mc{G}_{\mb{B}_2}(z) &= \frac{1}{N_u}\mathrm{Tr}\left[	\mc{G}_{\widetilde{\mb{E}}}(z)\right],\\
	\mc{V}_{\mb{B}_2}(z)&=\frac{1}{N_u}\log  {\det \left( {\frac{{{\boldsymbol{\widetilde \Omega }}\left(-z\right)}}{{ - z}}} \right)} + \frac{1}{N_u}\mtm{Tr}\left( \mc{G}_{\widetilde{\mb{E}}}(-z) \widetilde{{\tau}}  \right) \notag \\
	&-\frac{1}{N_u}\log \det\left(\mc{G}_{\mb{E}}(-z)\right),	 \label{VB2}
	\end{align}
	where $\eta_l(\widetilde{\mb{C}})$, $\widetilde{\eta}_l({\mb{C}})$, $\tau(\mb{E})$, $\widetilde{\tau}(\widetilde{\mb{E}})$ are the parameterized one-sided correlation matrices, which are shown as \eqref{A1}, \eqref{A2}, \eqref{A5} and \eqref{A6} in Appendix \ref{PARE}. The matrices  	$ \mc{G}_{\widetilde{\mb{E}}}(z) $, $\mc{G}_{\mb{E}}(z)$, $\widetilde{\boldsymbol{\Omega}}(z)$ and $ {\boldsymbol{\Omega}}(z)$  satisfy the following equations
	{\begin{align}
		\mc{G}_{\widetilde{\mb{E}}}(z) &= \left(\widetilde{\boldsymbol{\Omega}}(z) - \overline{\mb{H}} {\boldsymbol{\Omega}}(z)^{-1}\overline{\mb{H}}^\dagger \right)^{-1},\label{eqGEt}\\
		\mc{G}_{\mb{E}}(z) &= \left({\boldsymbol{\Omega}}(z) - \overline{\mb{H}}^\dagger \widetilde{\boldsymbol{\Omega}}(z)^{-1}\overline{\mb{H}} \right)^{-1},\label{eqGC}\\
		\widetilde{\boldsymbol{\Omega}}(z) &= z\mb{I}_{N_u} - \widetilde{\tau}(\mc{G}_{\mb{E}}(z)) ,\label{eqOmet}\\
		\boldsymbol{\Omega}(z) &= 1 - \tau(\mc{G}_{\widetilde{\mb{E}}}(z)).\label{eqOme}
		\end{align}}
{	From now on, for convenience of expression, we refer to the asymptotic sensing MI, asymptotic communication MI and weighted asymptotic MI as the sensing MI,  communication MI and  weighted MI, respectively.}	Then the sensing MI , the communication MI and the weighted MI can be rewritten as 
	\begin{align}
	I_s(z,{\bf{W}})&=LN_{r} \mc{V}_{\mb{B}_1}(\sigma^2),  \\
	I_c(z,{\bf{W}})&=N_u \mc{V}_{\mb{B}_2}(\sigma^2), \\
	I(z,{\bf{W}}) &= \rho   I_s(z,{\bf{W}}) + (1-\rho)  I_c(z,{\bf{W}})\label{IZS}.
	\end{align}
	Consequently, we reformulate the problem $	\left ({\textbf {P2} }\right)$ as
	\begin{subequations}
		\begin{align}
		\left ({\textbf {P3} }\right)&\max \limits _{ \bf{W}    }~~	I(z,{\bf{W}}) \label{obj}
		\\&  \mathrm {s.t.} \quad  \eqref{power_1}.
		\end{align}	
	\end{subequations}

	\subsection{Proposed PGA Algorithm} \label{Formu2}
	In order to solve the problem $\left ({\textbf {P3} }\right)$  , we propose the PGA algorithm. Firstly, we  calculate the gradient of the weighted MI in \eqref{IZS}. Then the updated beamforming  matrix at the $(i+1)$-th iteration is  updated as
	\begin{align}
	{\mb{\widetilde{W}}}^{\left[ {i+1} \right]}={{\mb{W}}^{\left[ i \right]}}+\lambda \nabla {{g}_\mb{W}}\left( {{\mb{W}}^{\left[ i \right]}} \right),
	\end{align}
	where ${\mb{W}}^{\left[ i \right]}$ denotes the beamforming matrix at the $i$-th iteration,  $\lambda$ denotes the step size, and the element of gradient  $\nabla {{g}_\mb{W}}$ of $I(z,{\bf{W}})$ is given by \eqref{GW},
	{where $\mb{E}_{i,j}$ is a matrix whose elements satisfy
		{\begin{align}
			[\mb{E}_{i,j}]_{s,t}=\left\{\begin{matrix}
			1, \qquad     \text{if} \; s=i \;  \text{and} \;  t=j,  \\
			0, \qquad   \qquad \; \; \;    \text{otherwise}.  \\
			\end{matrix} \right. 
			\end{align}}}
	To satisfy the transmit power constraint \eqref{power_1}, the solution $	{\mb{\widetilde{W}}}^{\left[ {i+1} \right]}$ is then projected onto the feasible region. The obtained solution at the $(i+1)$-th iteration is given by
	{\begin{align} \label{PROJ}
		{\mb{{W}}}^{\left[ {i+1} \right]} = \text{Proj}_{W}\left(\mb{\widetilde{W}}^{\left[ {i+1} \right]}\right),
		\end{align}}
	where the projection operator is given as
	{\begin{align}
		\text{Proj}_{W}=\left\{ \begin{matrix}
		\mb{W}, \qquad  \qquad \qquad if{{\left\| \mb{W}  \right\|}_{F}}^{2}\le P_t,  \\
		\sqrt{P_t}\frac{\mb{W}}{{{\left\| \mb{W} \right\|}_{F}}}, \qquad  \qquad  otherwise.  \\
		\end{matrix} \right.
		\end{align}}
	%	The iteration ends until the objective value \eqref{obj} converges. 
	The detailed algorithm is presented in Algorithm 1.
	\begin{algorithm}\label{op_theta}
		\caption{PGA algorithm for beamforming matrix design} %?????
		\begin{algorithmic}
			\State \textbf{initialize}:	Set the convergence criterion $ \varepsilon $, $i=0$, 
			and randomly generate the beamforming matrix 	$ \mb{W}^{[0]}$. Calculate $ I^{[0]}$ based on \eqref{IZS}. 
			\State	\textbf{repeat} 
			\State \quad Update ${\mb{{W}}}^{\left[ {i+1} \right]}$ based on \eqref{PROJ}.
			\State \quad Calculate $ I^{[i+1]}$ based on \eqref{IZS}.
			\State \quad $ i=i+1 $.
			\State  \textbf{until} 	$ \left|I^{[i]} - I^{[i-1]} \right| \le \varepsilon $.
			\State  \textbf{output}  Optimal beamforming matrix $\mb{W}$.
		\end{algorithmic}
	\end{algorithm}
	
	\begin{figure*}[t]
		\centering
		
		\begin{align}
	\\
		[\nabla {{g}_\mb{W}}]_{i,j}=&\rho\mtm{Tr}\left( {{\left( -\boldsymbol \Psi +{\mb{W}^{\dagger}}{{{\overline{\mb{G}}}}}{{{\widetilde{\boldsymbol{\Psi} }}}^{-1}}\overline{\mb{G}}\mb{W}+{{{\widetilde{\boldsymbol\Phi }}}^{-1}} \right)}^{-1}} {\mb{E}_{i,j}^{\dagger}}{\overline{\mb{G}}^{\dagger}}{{{\widetilde{\boldsymbol{\Psi} }}}^{-1}}\overline{\mb{G}}\mb{W} \right) \notag\\ &+(\rho - 1) \mtm{Tr}\left({{\left(\boldsymbol{\Omega} -{\mb{W}^{\dagger}}{{{\overline{\mb{H}}}}^{\dagger}}{{{\widetilde{\boldsymbol{\Omega} }}}^{-1}}\overline{\mb{H}}\mb{W} \right)}^{-1}}{\mb{E}_{i,j}^{\dagger}}{{\overline{\mb{H}}}^{\dagger}}{{\widetilde{\boldsymbol{\Omega} }}^{-1}}\overline{\mb{H}}\mb{W}\right), \label{GW}
		\end{align}

	\end{figure*}
	
	\section{Numerical Results}\label{secResult}
	In this section, we provide numerical results to verify the accuracy of the derived closed-form weighted MI expression and the  effectiveness of the proposed PGA algorithm. For the channel settings, the deterministic components are modeled as the direct links between uniform planar arrays (UPA) equipped at both the transmitter and the receiver. In addition,  statistical characteristics parameters of the channel including the deterministic unitary matrices  $\mb{U}$, $\mb{V}$, $ \mb{R}_l$ and $\mb{T}_l $, as well as the variance matrices $\mb{M}$ and $\mb{N}_l $, are generated randomly but fixed in each Monte Carlo simulation. Unless otherwise stated, the number of scatters, data streams and signal samples are set as $L=2$, $M=N_u$, and $N_s=M$, respectively. In addition, the transmit power budget is set as $P_t=N_t$. {It should be noted that since the radar detection link distance is typically longer than the communication link distance to UE, without loss of generality, we assume that the signal-noise-ratio (SNR) at BS is 20 dB higher than the SNR at UE. Besides, the SNR mentioned in this section refers to the SNR at BS.} Furthermore, the beamforming matrix is set as $\mb{W}=\sqrt{P_t/M}\mb{I}_{M}$ without optimization. All simulation results are  generated by averaging over $10^4$ channel realizations. 
	
	We first verify the accuracy of the obtained closed-form MI expression for communication MI and sensing MI  without optimization in Fig.~\ref{mainfig1}. The number of antennas are  $N_t=N_r=N_u=16$. The  solid lines   represent theoretical results calculated by  the communication MI expression and sensing MI expression, where the  markers represent corresponding Monte Carlo simulation results. It can be observed that  the  theoretical analyses match the simulation results excellently, verifying the accuracy of our derived results. 
	
	\begin{figure}[t]
		\centering
%		\vspace{-0.3cm}

			{\centering
				\includegraphics[width=0.6\columnwidth]{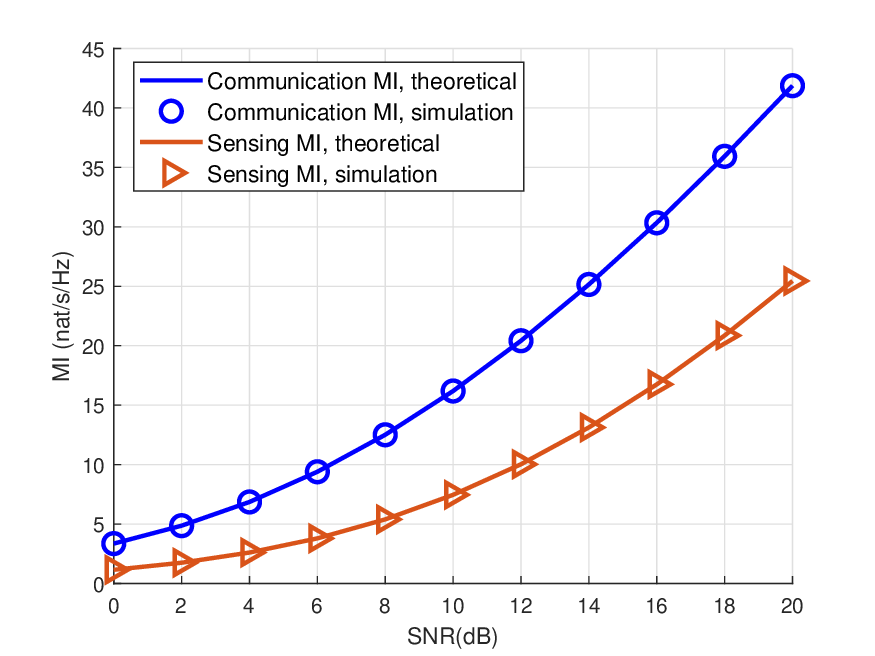}}
			\caption{Communication MI and sensing MI versus SNR.}
	\end{figure}

	Subsequently, the convergence of the proposed PGA algorithm under different numbers of antennas is plotted in Fig.~\ref{mainfig1}. The weighting factor and the SNR are set as $\rho=0.8$ and 10 dB respectively.  It can be observed that the weighted MI  increases with the number of antennas. {This is because a larger number of antennas provides higher diversity gain and greater design degrees
		of freedom (DoFs).} In addition, the weighted MI increases with the iterations and converges within 3 iterations, demonstrating the fast convergence of the proposed PGA algorithm.
	\begin{figure}[t]
		\centering
		{
			\includegraphics[width=0.6\columnwidth]{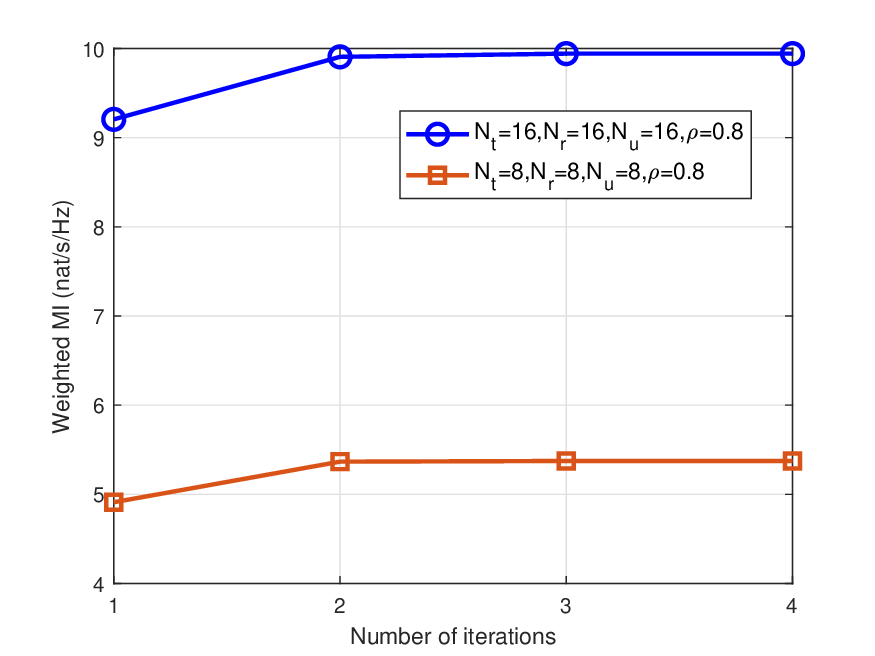}}
		\caption{Weighted MI versus the number of iterations, where $\rho = 0.8$.}
		\label{mainfig1}
	\end{figure}
	
		\begin{figure}[t]
	\centering
	{
		\includegraphics[width=0.6\columnwidth]{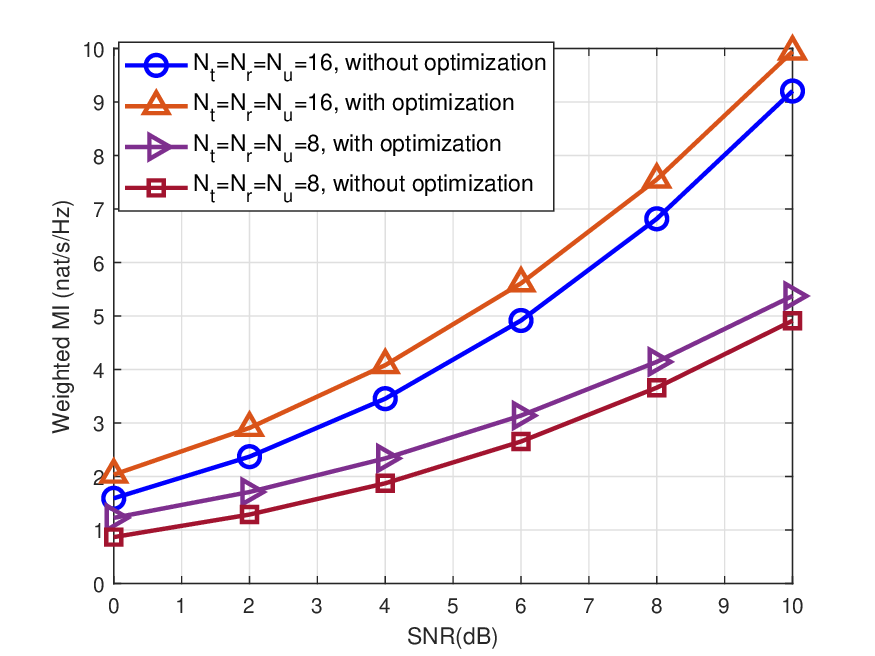}}
	\caption{ Weighted MI versus SNR with different schemes, where $\rho= 0.8$.}
\end{figure}

\begin{figure}[t]
	\centering{
		\includegraphics[width=0.6\columnwidth]{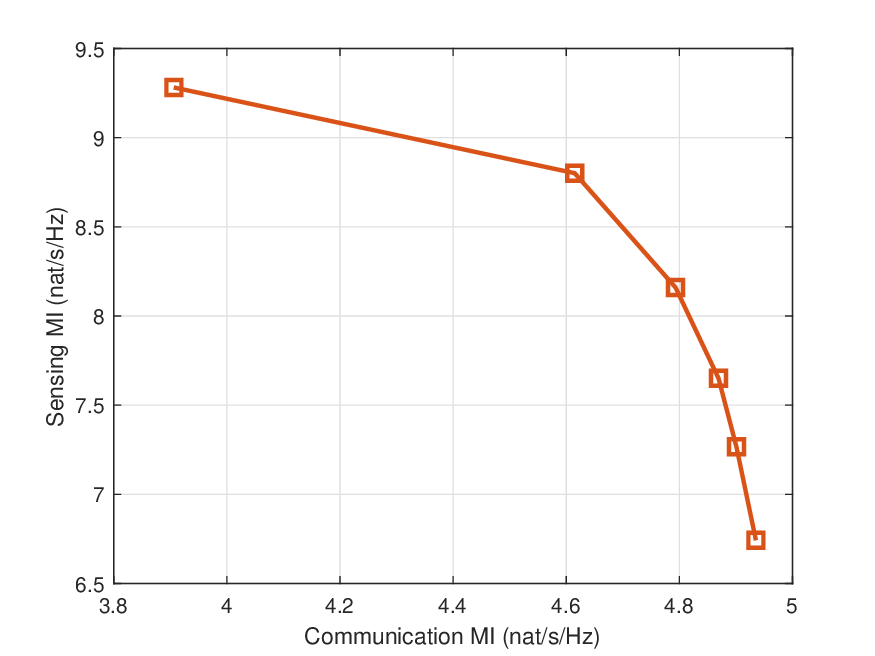}}
	\caption{Sensing MI versus Communication MI.}
	\label{mainfig2}
\end{figure}

	To demonstrate the effectiveness of the proposed optimization algorithm, we  show the optimized weighted MI with different numbers of antennas for the considered system in Fig.~\ref{mainfig2}. The weighting factor is set as $\rho=0.8$. It can be observed that with the same SNR, the proposed PGA algorithm can improve the weighted MI.

%}

To further depict the trade-off between sensing and communication MI, we show the optimized weighted MI with weighting factors $\rho$ ranging from 0 to 1 in Fig.~\ref{mainfig2}. The number of antennas and the SNR are set as $N_t=N_r=N_u=8$ and 10 dB, respectively.  As the weighting factor increases, the ISAC system places greater emphasis on sensing performance, and conversely, prioritizes communication performance when the weighting factor decreases.

\section{Conclusions}\label{secConclude}
In this paper, we investigated the beamforming design for a practical MIMO ISAC system. To derive the closed-form expression of the weighted MI for the considered system, we resorted to the operator-valued free probability. Based on the closed-form expression, we proposed the PGA algorithm to design the transmit beamforming matrix for maximizing the weighted MI. Simulation results verified the accuracy of the derived closed-form expression and the effectiveness of the proposed algorithm.

\appendix
	\setcounter{section}{0}
	\section{Parameterized One-sided Correlation Matrices}\label{PARE}
	We  define the parameterized one-sided correlation matrices of $\mb{G}_l$ as
	\begin{align}
	\eta_l(\widetilde{\mb{C}}) &= \mbb{E}[\widetilde{\mb{G}}_l^\dagger\widetilde{\mb{C}} \widetilde{\mb{G}}_l] = \frac{1}{N_t} {\mb{T}}_l\boldsymbol{\Pi}_l(\widetilde{\mb{C}}){\mb{T}}_l^\dagger, 
	1\le l\le L, \label{A1}
	\\
	\widetilde{\eta}_l({\mb{C}}) &= \mbb{E}[\widetilde{\mb{G}}_l{\mb{C}} \widetilde{\mb{G}}_l^\dagger] = \frac{1}{N_t}\mb{R}_l \widetilde{\boldsymbol{\Pi}}_l({\mb{C}}) \mb{R}_l^\dagger, 1\le l\le L,\label{A2}
	\end{align}
	where $\widetilde{\boldsymbol{C}} \in \mbb{C}^{N_t \times N_t}$ and $\mb{C}_l\in \mbb{C}^{N_r \times N_r}$ are any Hermitian matrices, $\boldsymbol{\Pi}_l(\widetilde{\mb{C}})$ and $\widetilde{\boldsymbol{\Pi}}_l({\mb{C}})$ are diagonal matrices with the entries given by 
	\begin{align}
	\left[\boldsymbol{\Pi}_l(\widetilde{\mb{C}})\right]_{i,i} \!&= \sum_{j = 1}^{N_r} \left([\mb{N}_l]_{j,i}\right)^2 \left[\mb{R}_l^\dagger \widetilde{\mb{C}} \mb{R}_l\right]_{j,j}, 1\le i\le N_t,\label{A3}
	\\
	\left[\widetilde{\boldsymbol{\Pi}}_l({\mb{C}})\right]_{i,i} &= \sum_{j=1}^{N_t} \left([\mb{N}_l]_{i,j}\right)^2 \left[ {\mb{T}}_l^\dagger {\mb{C}} {\mb{T}}_l \right]_{j,j},  1\le i\le N_r.\label{A4}
	\end{align}
	Similarly,  the parameterized one-sided correlation matrices of $\mb{H}_c$, are defined as
	\begin{align}
	\tau(\mb{E}) &= \mbb{E}[{\widetilde{\mb{H}}_c}^\dagger\mb{E}\widetilde{\mb{H}}_c] = \frac{1}{N_t} \mb{V}\boldsymbol{\Sigma}(\mb{E})\mb{V}^\dagger, \label{A5}
	\\
	\widetilde{\tau}(\widetilde{\mb{E}}) &= \mbb{E}[\widetilde{\mb{H}}_c\widetilde{\mb{E}}{\widetilde{\mb{H}}_c}^\dagger] = \frac{1}{N_t}\mb{U} \widetilde{\boldsymbol{\Sigma}}(\widetilde{\mb{E}}) \mb{U}^\dagger, \label{A6}
	\end{align}
	where $\widetilde{\mb{E}} \in \mbb{E}^{N_t \times N_t}$ and $\mb{E}\in \mbb{C}^{N_u \times N_u}$ are arbitrary Hermitian matrices. $\boldsymbol{\Sigma}(\mb{E})$ and $\widetilde{\boldsymbol{\Sigma}}(\widetilde{\mb{E}})$ are diagonal matrices with the entries given by 
	\begin{align}
	\left[\boldsymbol{\Sigma}(\mb{E})\right]_{i,i} &= \sum_{j = 1}^{N_u} \left([\mb{M}]_{j,i}\right)^2 \left[\mb{U}^\dagger \mb{D} \mb{U}\right]_{j,j},\label{A7}
	\\
	\left[\widetilde{\boldsymbol{\Sigma}}{(\widetilde {\mb{E}}})\right]_{i,i} &= \sum_{j=1}^{N_t} \left([\mb{M}]_{i,j}\right)^2 \left[ \mb{V}^\dagger \widetilde{\mb{E}} \mb{V} \right]_{j,j}.\label{A8} \end{align}
	The parameterized one-sided correlation matrices of $\mb{S}$ are
	\begin{align}
	\zeta(\mb{D}) &= \mbb{E}[{\mb{S}}^\dagger{\mb{D}}{\mb{S}}] = \frac{1}{N_s} \mtm{Tr}(\mb{D})\mb{I}_{N_s},\label{A9}
	\\
	\widetilde{\zeta}(\widetilde{\mb{D}}) &= \mbb{E}[{\mb{S}}\widetilde{\mb{D}}{\mb{S}}^\dagger] =  \frac{1}{N_s} \mtm{Tr}(\widetilde{\mb{D}})\mb{I}_{M}, \label{A10}
	\end{align}
	where $\widetilde{\mb{D}} \! \in\! \mbb{C}^{N_s\! \times\! N_s}\! $  and ${\mb{D}}\!\in\! \mbb{C}^{M \!\times\! M}\! $ are any Hermitian matrices.	
	\section{Proof of Proposition~\ref{prop_cauchyB1}}\label{appx_shano}
	To prove the Proposition~ \ref{prop_shano1}, we need to prove that \eqref{cauchy1} holds with the $ \mc{V}_{\mb{B}_1}(z) $ given in  \eqref{VB1}. The partial derivative of $ \mc{V}_\mb{B}(z) $ with respect to $ z$  is given by
	{\begin{align}
		\frac{d}{dz} \mc{V}_{\mb{B}_1}(z)&=\frac{1}{LN_r}\frac{d}{dz}\log  {\det \left( {\frac{{{{\widetilde {\boldsymbol \Psi} }}}}{{ - z}}} \right)} + \frac{1}{LN_r}\frac{d}{dz}\log{\det \left( {\boldsymbol{\Psi}} - \overline{\mb{G}}^\dagger {\widetilde{\boldsymbol{\Psi}}}^{-1} \overline{\mb{G}}^\dagger - {\widetilde{\boldsymbol{\Phi}}}^{-1}  \right)} ,\notag \\
		& +\frac{1}{LN_r}\frac{d}{dz}\log \det\left( {\widetilde{\boldsymbol{\Phi}}}	\right) + \frac{1}{LN_r}\frac{d}{dz}\mtm{Tr}\left( \mc{G}_{\widetilde{\mb{C}}} \left(-z\mb{I}_{LNr}- \widetilde{\boldsymbol{\Psi}} \right) \right), \notag \\
		&+\frac{1}{LN_r}\frac{d}{dz}\mtm{Tr}\left( \mc{G}_{\widetilde{\mb{D}}} {\zeta}  \right) +\frac{1}{LN_r}\frac{d}{dz} \log \det \left( \boldsymbol{\Phi} \right) \label{VB}
		\end{align}}
	where notation $ (-z) $ are omitted for convenience. For a matrix function $\mb{K}(z)$, the following equations hold:
	\begin{align}
	\frac{d}{dz} \log \det \mb{K}(z)&= \mathrm{Tr} \left(\mb{K}(z)^{-1}\frac{d\mb{K}(z)}{dz} \right), \label{der1}
	\\	\mathrm{Tr}\left(\frac{d\mb{K}(z)^{-1}}{dz}    \right)	&= -\mathrm{Tr} \left(\mb{K}(z)^{-1}\frac{d\mb{K}(z)}{dz}\mb{K}(z)^{-1} \right). \label{der2}
	\end{align}
	According to \cite[Lemma1]{Wang2024}, the following equations hold:
	\begin{align}
	&	\mathrm{Tr}(\mb{A}_1\eta_l(\mb{A}_2))=\mathrm{Tr}(\mb{A}_2\widetilde{\eta}_l(\mb{A}_1)), 1\le l \le L,\label{der3}\\
	&\mathrm{Tr}(\mb{B}_1\zeta(\mb{B}_2))=\mathrm{Tr}(\mb{B}_2\widetilde{\zeta}(\mb{B}_1)). \label{der4}
	\end{align}
	Then we will simplify the  terms on the right-hand side of equation \eqref{VB}  based on \eqref{der1}, \eqref{der2}, \eqref{der3} and \eqref{der4}.
	
	For the first term of \eqref{VB}, we can obtain
	\begin{align}
	\frac{d}{dz}\log {\det \left( {\frac{{{{\widetilde {\boldsymbol \Psi} }}}}{{ - z}}} \right)} &=  
	\mathrm{Tr}\left(  \left( \frac{{\widetilde {\boldsymbol \Psi} }}{-z}\right)^{-1}  \frac{d\left( \frac{{\widetilde {\boldsymbol \Psi} }}{-z}\right)}{dz} \right) \notag \\
	&=\mathrm{Tr}\left( -\frac{1}{z}\mb{I}  + {{\widetilde {\boldsymbol \Psi} }}^{-1}\frac{d{\widetilde {\boldsymbol \Psi} }}{dz} \right) \notag \\
	&=-\frac{LN_r}{z}+\mathrm{Tr}\left( {{\widetilde  {\boldsymbol \Psi} }}^{-1}\frac{d{\widetilde {\boldsymbol \Psi} }}{dz} \right).\label{term1}
	\end{align}
	
	For the second term , the third term and the last term  in \eqref{VB}, we can get 
	\begin{align}
	\frac{d}{dz}\log \det\left(  - {{{{\widetilde {\boldsymbol{\Phi} }}}}^{ - 1}} + {\boldsymbol{\Delta}} \right)   &=\mathrm{Tr}\left(\left(  - {{{{\widetilde {\boldsymbol{\Phi}} }}}^{ - 1}} + {\boldsymbol{\Delta}} \right)^{-1}  \frac{d\left(  - {{{{\widetilde {\boldsymbol{\Phi}} }}}^{ - 1}} + {\boldsymbol{\Delta}} \right)}{dz}  \right),\label{term2_1}\\
	\frac{d}{dz}\log\det\left( {{{\widetilde {\boldsymbol{\Phi}} }}} \right) &=\mathrm{Tr}\left(  {{{\widetilde {\boldsymbol{\Phi}} }}}^{-1}\frac{d {{{\widetilde {\boldsymbol{\Phi}} }}} }{dz}\right),\label{term2}\\
	\frac{d}{dz}\log\det\left( {{{ \boldsymbol{\Phi} }}} \right) &=\mathrm{Tr}\left(  {{{ \boldsymbol{\Phi} }}}^{-1}\frac{d {{{ \boldsymbol{\Phi} }}} }{dz}\right)\label{term3},
	\end{align}
	{\color{black}where ${\boldsymbol{\Delta}}$ is denoted as ${\boldsymbol{\Delta}} = {\boldsymbol{\Psi}} - \overline{\mb{G}}^\dagger {\widetilde{\boldsymbol\Psi}}^{-1} \overline{\mb{G}}^\dagger$.} Therefore  $\mc{G}_{\mb{C}}$ and $	\mc{G}_{\mb{D}}$ can be expressed as
	\begin{align}
	\mc{G}_{\mb{C}}&\!= \left(\boldsymbol{\Delta} - \widetilde{{\boldsymbol\Phi}}^{-1} \right)^{-1}, \\
	\mc{G}_{\mb{D}} &= \left(\widetilde{\boldsymbol{\Phi}}(z) - \boldsymbol{\Delta}^{-1} \right)^{-1}.
	\end{align}

	By applying \eqref{der3}, \eqref{der4} and the Woodbury identity, the forth and the fifth term  can be expressed as 
	\begin{align}
	&\frac{d}{dz}\mtm{Tr}\left( \mc{G}_{\widetilde{\mb{C}}} \left(-z\mb{I}_{LNr}- \widetilde{\boldsymbol{\Psi}} \right) \right) \notag \\
	&=\mathrm{Tr}\left( {	\mc{G}_{\widetilde{\mb{C}}}\frac{d\left( -z\mb{I}_{LNr}- \widetilde{\boldsymbol{\Psi}}\right)}{dz}}\right)+\mathrm{Tr}\left( { \mathrm {diag}\{\widetilde{\eta}_1\left( \mc{G}_{\mb{C}} \right)     ,...,\widetilde{\eta}_L\left( \mc{G}_{\mb{C}} \right) \}\notag	\frac{d\mc{G}_{\widetilde{\mb{C}}}}{dz}} \right)\\\notag
	&= 	\mathrm{Tr}\left( {	\mc{G}_{\widetilde{\mb{C}}}\frac{d\left( -z\mb{I}_{LNr}- \widetilde{\boldsymbol{\Psi}}\right)}{dz}}\right) + \mathrm{Tr}\left( \mc{G}_{\mb{C}}\frac{d{ \left(  \sum\limits_{l=1}^{L}{{{\eta }_{l}}\left( \mc{G}_{{\widetilde{\mb{C}}}_l}  \right)}  \right) } }{dz}   \right)   \\\notag &=-\mathrm{Tr}\left(\mc{G}_{\widetilde{\mb{C}}}\right) - \mathrm{Tr}\left( \mc{G}_{\widetilde{\mb{C}}} \frac{d\widetilde{\boldsymbol{\Psi}} }{dz}\right)-\mathrm{Tr}\left( \mc{G}_{\mb{C}}\frac{d{\boldsymbol{\Psi}} }{dz}   \right)\\\notag
	&=-\mathrm{Tr}\left(\mc{G}_{\widetilde{\mb{C}}}\right)-\mathrm{Tr}\left( \left(\widetilde{\boldsymbol{\Psi}}(z)^{-1} + \widetilde{\boldsymbol{\Psi}}(z)^{-1}\overline{\mb{G}}\mc{G}_{\mb{C}} \overline{\mb{G}}^\dagger\widetilde{\boldsymbol{\Psi}}(z)^{-1} \right) \frac{d\widetilde{\boldsymbol{\Psi}} }{dz}\right)- \mathrm{Tr}\left( \mc{G}_{\mb{C}}\frac{d{\boldsymbol{\Psi}} }{dz}   \right) \\
	&= -\mathrm{Tr}\left(\mc{G}_{\widetilde{\mb{C}}}\right)-	\mathrm{Tr}\left( {	\widetilde{\boldsymbol{\Psi}}^{-1}\frac{d\widetilde{\boldsymbol{\Psi}} }{dz}}\right) + \mathrm{Tr}\left( {\mc{G}_{\mb{C}}	\frac{d\left({ \overline{\mb{G}}^\dagger \widetilde{\boldsymbol{\Psi}}   } \overline{\mb{G}}\right) }{dz}}\right) -\mathrm{Tr}\left( \mc{G}_{\mb{C}}\frac{d{{\boldsymbol\Psi}} }{dz}   \right), \label{term4}
	\end{align}
	
	\begin{align}
	& \frac{d\mathrm{Tr}\left( \mc{G}_{\widetilde{\mb{D}}}\zeta  \right)}{dz} \notag \\
	&=\mathrm{Tr}\left( \mc{G}_{\widetilde{\mb{D}}}\frac{d\zeta }{dz} \right)+\mathrm{Tr}\left( \zeta \frac{d\mc{G}_{\widetilde{\mb{D}}}}{dz} \right)  \notag\\
	&=\mathrm{Tr}\left(\mc{G}_{\widetilde{\mb{D}}}\frac{d\left( \mb{I}-\boldsymbol{\Phi}  \right)}{dz} \right)+\mathrm{Tr}\left( \mc{G}_{{\mb{D}}}\frac{d\tilde{\zeta }}{dz} \right) \notag \\ 
	& =-\mathrm{Tr}\left( \mc{G}_{\widetilde{\mb{D}}}\frac{d\boldsymbol{\Phi} }{dz} \right)-\mathrm{Tr}\left( \mc{G}_{{\mb{D}}}\frac{d\widetilde{\boldsymbol\Phi }}{dz} \right) \notag \\
	&=  -\mathrm{Tr}\left(  {{{ \boldsymbol{\Phi} }}}^{-1}\frac{d {{{ \boldsymbol{\Phi} }}} }{dz}\right) - \mathrm{Tr}\left( \mc{G}_{{\mb{D}}}\frac{d\widetilde{\boldsymbol\Phi }}{dz} \right).\label{term5}		 
	\end{align}
	
	Similarly,  we can get the following equations
	\begin{align}
	& \text{Tr}\left( {{\left( \boldsymbol\Delta -{{ {\widetilde{\boldsymbol\Phi }} }^{-1}} \right)}^{-1}}\frac{d\left( -\boldsymbol\Delta  \right)}{dz} \right) \notag \\
	&=\text{Tr}\left( {{\left( \boldsymbol\Delta -{{ {\widetilde{\boldsymbol\Phi }} }^{-1}} \right)}^{-1}}\frac{d\left({ \overline{\mb{G}}^\dagger \widetilde{\boldsymbol{\Psi}}   } \overline{\mb{G}} - \boldsymbol{\Psi} \right)}{dz} \right) \notag\\  
	& =\text{Tr}\left( {\mc{G}_\mb{C}}\frac{d\left({ \overline{\mb{G}}^\dagger \widetilde{\boldsymbol{\Psi}}   } \overline{\mb{G}}\right)}{dz} \right)-\text{Tr}\left( {\mc{G}_\mb{C}}\frac{d \boldsymbol \Psi }{dz} \right) -\text{Tr}\left( \mc{{G}}_\mb{D}\frac{d\widetilde{\boldsymbol\Phi }}{dz} \right)+\text{Tr}\left( {{\left( \widetilde{\boldsymbol\Phi }-{{\boldsymbol\Delta }^{-1}} \right)}^{-1}}\frac{d\widetilde{\boldsymbol\Phi }}{dz} \right),\\ 
	&\text{Tr}\left( {{\left( \widetilde{\boldsymbol\Phi }-{{\boldsymbol\Delta }^{-1}} \right)}^{-1}}\frac{d\widetilde{\boldsymbol\Phi }}{dz} \right) \notag \\
	&=\text{Tr}\left( {{{\widetilde{\boldsymbol\Phi }}}^{-1}}\frac{d\widetilde{\boldsymbol\Phi }}{dz} \right)-\text{Tr}\left( {{\left( \boldsymbol\Delta -{{ {\widetilde{\boldsymbol\Phi }} }^{-1}} \right)}^{-1}}\frac{d{{ {\widetilde{\boldsymbol\Phi }} }^{-1}}}{dz} \right).
	\end{align}
	Then \eqref{term2_1} can be expressed as
	\begin{align}
	&\frac{d}{dz}\log \det\left(  - {{{{\widetilde {\boldsymbol\Phi} }}}^{ - 1}} + {\boldsymbol\Delta} \right) \notag \\
	&=-\text{Tr}\left( {{\left( \boldsymbol\Delta -{{ {\widetilde{\boldsymbol\Phi }} }^{-1}} \right)}^{-1}}\frac{d\left( -\boldsymbol\Delta  \right)}{dz} \right) -\text{Tr}\left( {{\left( \boldsymbol\Delta -{{ {\widetilde{\boldsymbol\Phi }} }^{-1}} \right)}^{-1}}\frac{d{{ {\widetilde{\boldsymbol\Phi }} }^{-1}}}{dz} \right) ,\notag\\
	& =-\text{Tr}\left( {\mc{G}_\mb{C}}\frac{\left({ \overline{\mb{G}}^\dagger \widetilde{\boldsymbol{\Psi}}   } \overline{\mb{G}}\right)}{dz} \right)+\text{Tr}\left( {\mc{G}_\mb{C}}\frac{d\boldsymbol\Psi }{dz} \right) +\text{Tr}\left( {\mc{G}_\mb{D}}\frac{d\widetilde{\boldsymbol\Phi }}{dz} \right)-\text{Tr}\left( {{{\widetilde{\boldsymbol\Phi }}}^{-1}}\frac{d\widetilde{\boldsymbol\Phi }}{dz} \right).\label{term6}
	\end{align}
	Combining \eqref{term1}, \eqref{term2}, \eqref{term3}, \eqref{term4}, \eqref{term5}, \eqref{term6}, we have
	\begin{align}
	\frac{d}{dz}\mc{V}_{\mb{B}_1}(z)&= -\frac{1}{z}-\frac{1}{LN_r}\mathrm{Tr}\left(\mc{G}_{\widetilde{\mb{C}}}  \right) =-\frac{1}{z}-\mc{G}_{\mb{B}_1}(-z),
	\end{align}
	and the proof of Proposition \ref{prop_shano1} is completed.


\begin{thebibliography}{99}		
	\bibitem{zhang2021enabling}
	J.~A. Zhang, M.~L. Rahman, K.~Wu, X.~Huang, Y.~J. Guo, S.~Chen, and J.~Yuan,
	``Enabling joint communication and radar sensing in mobile networks—a
	survey,'' {\it{IEEE Commun. Surveys Tuts.}}, vol.~24, no.~1, pp.
	306--345, 2021.
	
	\bibitem{liu2018mu}
	F.~Liu, C.~Masouros, A.~Li, H.~Sun, and L.~Hanzo, ``Mu-MIMO communications with
	MIMO radar: From co-existence to joint transmission,'' {\it{IEEE Trans. Wireless Commun.}}, vol.~17, no.~4, pp. 2755--2770,
	2018.
	
	\bibitem{wang2022partially}
	X.~Wang, Z.~Fei, J.~A. Zhang, and J.~Xu, ``Partially-connected hybrid
	beamforming design for integrated sensing and communication systems,''
	{\it{IEEE Trans. Commun.}}, vol.~70, no.~10, pp. 6648--6660,
	2022.
	
	\bibitem{Al2023KL1}
	M. Al-Jarrah, E. Alsusa and C. Masouros, "A Unified Performance Framework for Integrated Sensing-Communications Based on KL-Divergence," IEEE Trans. Wireless Commun., vol. 22, no. 12, pp. 9390-9411, Dec. 2023.
	
	\bibitem{Fei2024KL2}
	Z. Fei, S. Tang, X. Wang, F. Xia, F. Liu and J. Andrew Zhang, "Revealing the Trade-Off in ISAC Systems: The KL Divergence Perspective," IEEE Wireless Commun. Lett., vol. 13, no. 10, pp. 2747-2751, Oct. 2024.
	

	

	
	\bibitem{xu2015radar}
	R.~Xu, L.~Peng, W.~Zhao, and Z.~Mi, ``Radar mutual information and
	communication channel capacity of integrated radar-communication system using
	MIMO,'' {\it{ICT Express}}, vol.~1, no.~3, pp. 102--105, 2015.
	
	\bibitem{liu2019robust}
	Y.~Liu, G.~Liao, and Z.~Yang, ``Robust OFDM integrated radar and communications
	waveform design based on information theory,'' {\it{Signal
			Process.}}, vol.
	162, pp. 317--329, 2019.
	
	\bibitem{dai2017multiuser}
	M.~Dai and B.~Clerckx, ``Multiuser millimeter wave beamforming strategies with
	quantized and statistical CSIT,'' {\it{IEEE Trans. Wireless Commun.}}, vol.~16, no.~11, pp. 7025--7038, 2017.
	
	\bibitem{dang2020joint}
	J.~Dang, Z.~Zhang, and L.~Wu, ``Joint beamforming for intelligent reflecting
	surface aided wireless communication using statistical CSI,'' {\it{China Commun.}}, vol.~17, no.~8, pp. 147--157, 2020.
	
	\bibitem{weichselberger2006stochastic}
	W.~Weichselberger, M.~Herdin, H.~Ozcelik, and E.~Bonek, ``A stochastic MIMO
	channel model with joint correlation of both link ends,'' {\it{IEEE Trans. Wireless Commun.}}, vol.~5, no.~1, pp. 90--100, 2006.
	
	\bibitem{xu2023joint}
	Y.~Xu, Y.~Li, J.~A. Zhang, M.~Di~Renzo, and T.~Q. Quek, ``Joint beamforming for
	RIS-assisted integrated sensing and communication systems,''{ \it{IEEE Trans. Commun.}}, 2023.
	
	\bibitem{Hoydis2011Asymptotic}
		Hoydis, Jakob, Romain Couillet, and Mérouane Debbah. "Asymptotic analysis of double-scattering channels." 2011 Conference Record of the Forty Fifth Asilomar Conference on Signals, Systems and Computers (ASILOMAR). IEEE, 2011.
	
	\bibitem{tulino2004random}
	A. M. Tulino and S. Verdu, ``Random matrix theory and wireless communications,”
	{\it Found. Trends Commun. Inf. Theory}, vol. 1, pp. 1–182, Jun. 2004, Now Publishers.
	
	\bibitem{muller2002asymptotic}
	R.~R. Muller, ``On the asymptotic eigenvalue distribution of concatenated
	vector-valued fading channels,'' {\it{IEEE Trans. Inf. Theory}}, vol.~48, no.~7, pp. 2086--2091, 2002.
	
	\bibitem{belinschi2017analytic}
	S.~T. Belinschi, T.~Mai, and R.~Speicher, ``Analytic subordination theory of
	operator-valued free additive convolution and the solution of a general
	random matrix problem,'' {\it{J. Reine Angew. Math. (Crelles J.)}}, vol. 2017, no. 732, pp. 21--53, 2017.
	
	\bibitem{Silverstein1995Strong}
		Silverstein, J. W.,  "Strong convergence of the empirical distribution of eigenvalues of large dimensional random matrices," Journal of Multivariate Analysis, vol. 55, no. 2, pp. 331-339, 1995.
	
	\bibitem{zheng2023mutual}
	Z.~Zheng, S.~Wang, Z.~Fei, Z.~Sun, and J.~Yuan, ``On the mutual information of
	multi-RIS assisted MIMO: From operator-valued free probability aspect,''
	{\it{IEEE Trans. Commun.}}, vol. 71, no. 12, pp. 6952-6966,  2023.
	
	\bibitem{Wang2024}
	S. Wang, Z. Zheng, Z. Fei, J. Guo, J. Yuan and Z. Sun, "Toward Ergodic Sum Rate Maximization of Multiple-RIS-Assisted MIMO Multiple Access Channels Over Generic Rician Fading," {\it{IEEE Trans. Wireless Commun.}}, vol. 23, no. 8, pp. 9613-9628,  2024
	
	
\end{thebibliography}
\end{document}